\def\real{\mathbb{R}}
\newcommand{\until}[1]{\{1,\dots, #1\}}
\newcommand{\subscr}[2]{#1_{\textup{#2}}}
\newcommand{\supscr}[2]{#1^{\textup{#2}}}
\newcommand{\setdef}[2]{\{#1 \; | \; #2\}}
\newcommand{\map}[3]{#1: #2 \rightarrow #3}
\newcommand\oprocendsymbol{\hbox{$\square$}}
\newcommand\oprocend{\relax\ifmmode\else\unskip\hfill\fi\oprocendsymbol}
\newcommand\bit[1]{\textit{\textbf{#1}}}
\def \bs {\boldsymbol}
\def \mc {\mathcal}
\def\tran{^\top}
\def \etal {\emph{et al.}}
\def \mr {\mathrm}
\newtheorem{theorem}{Theorem}
\newtheorem{lemma}{Lemma}
\newtheorem{remark}{Remark}
\newtheorem{assumption}{Assumption}
\newtheorem{definition}{Definition}
\newcommand{\norm}[1]{\left\lVert#1\right\rVert}
\title{Towards Modeling Human Motor Learning Dynamics \\ in High-Dimensional Spaces
\thanks{This work has been supported by NSF Award CMMI-1940950.}
}
\author{Ankur Kamboj, Rajiv Ranganathan, Xiaobo Tan, and Vaibhav Srivastava\\
\thanks{A. Kamboj, X. Tan, and V. Srivastava are with the Department of Electrical and Computer Engineering, Michigan State University, East Lansing, MI 48823 USA. 
\texttt{\{ankurank, xbtan, vaibhav\}@msu.edu}.}
\thanks{R. Ranganathan is with the Department of Kinesiology, Michigan State University, East Lansing, MI 48823 USA. 
\texttt{rrangana@msu.edu}.}
}
\begin{document}


\maketitle

\begin{abstract}
Designing effective rehabilitation strategies for upper extremities, particularly hands and fingers, warrants the need for a computational model of human motor learning. The presence of large degrees of freedom (DoFs) available in these systems makes it difficult to balance the trade-off between learning the full dexterity and accomplishing manipulation goals. The motor learning literature argues that humans use motor synergies to reduce the dimension of control space. Using the low-dimensional space spanned by these synergies, we develop a computational model based on the internal model theory of motor control. We analyze the proposed model in terms of its convergence properties and fit it to the data collected from human experiments. We compare the performance of the fitted model to the experimental data and show that it captures human motor learning behavior well.
\end{abstract}

\IEEEpeerreviewmaketitle

\section{Introduction}
Motor impairments after a neurological injury such as stroke are one of the leading causes of disability in the United States \cite{mozaffarian2015executive}. Impairments of upper extremities in particular, including hand and finger function, are common, with $75\%$ of stroke survivors facing difficulties performing daily activities \cite{g1999long,lawrence2001estimates}. Critically, impairments after stroke include not only muscle and joint-specific deficits such as weakness and changes in the kinematic workspace \cite{cruz2005kinetic}, but also coordination deficits such as reduced independent joint control and impairments in finger individuation and enslaving \cite{lang2003differential,li2003effects}. Thus, understanding how to address these coordination deficits is critical for improving movement rehabilitation.

A key step in addressing these coordination deficits in the upper extremity and promoting rehabilitation is the development of a computational model of learning that takes place in human motor systems \cite{zhou2016learning}. These motor systems involve large DoFs; for example, the hand alone has over 25 joints and 30 muscles, which gives rise to the incredible dexterity that we possess when grasping and manipulating objects. A key challenge to motor learning in such high-dimensional spaces is to balance the exploration required to learn full dexterity induced by the high DoFs and learning just enough dexterity to generate desired motion. The motor learning literature has argued that the human nervous system uses a small number of \emph{motor synergies}, which are defined as a coordinated motion of a group of joints/DoFs, to control the high-DoF body motor systems \cite{tresch2006matrix}. Santello \etal~\cite{santello1998postural} concluded that two principal postural synergies accounted for more than $80\%$ of the postural data collected from various grasp postures, suggesting that most of the grasping postures can be explained by just two underlying motor synergies, a substantial reduction from the 15 DoF data that they recorded. We build on this idea of motor synergies to develop a dynamical model of human motor learning (HML).

The need for HML models for physical as well as robotic rehabilitation has been outlined in \cite{zhou2016learning}.
While several learning mechanisms have been proposed for HML processes \cite{haith2013model}, it has mostly been viewed in the literature as an adaptation in motor tasks, and consequently, several computational models of HML have been rooted in adaptive control theory. For example, Shadmehr \etal~\cite{shadmehr1994adaptive} proposed the idea of dynamic internal model in HML that adapts to the changed limb dynamics because of a force field imposed on the hand. Similarly, Zhou \etal~\cite{zhou2011human} leveraged iterative model reference adaptive control to model human learning through repetitive tasks.
However, most of these works only deal with motor learning in relatively low-dimensional spaces.

Pierella \etal~\cite{pierella2019dynamics} study motor learning of a novel task in high DoF motor systems where multiple solutions are available. They develop a dynamic model of motor learning that relies on the formation of internal models only at the end of a trial. We develop similar models for motor learning tasks studied in this paper; however, this work is distinct from \cite{pierella2019dynamics} in two key aspects: (a) the motor learning tasks studied in this paper involve higher DoFs, which require the use of synergies for efficient modeling, and (b) our setup deals with continuous feedback, which leads to faster learning but presents modeling and estimation challenges.

In this paper, we develop a computational model of HML in high-dimensional spaces using the context of learning novel coordinated movements of hand finger joints. Our model is based on the internal model theory of motor control~\cite{shadmehr2010error} and leverages the idea of motor synergies for efficient learning using low-dimensional representations. Based on the evidence in the motor learning literature~\cite{yavari2013fast}, the model comprises fast forward learning dynamics and slow inverse learning dynamics. We analyze the model using techniques from adaptive control theory and singular perturbation analysis, and establish its convergence properties. Finally, we fit the proposed model to experimental data and compare the fitted model with the experimental data to show that it captures human motor learning behavior effectively.

The rest of the paper is organized as follows. Section \ref{section2} outlines the experimental setup that we leverage to build our motor learning model as well as some data preprocessing techniques we adopted in the paper. Section \ref{section3} describes the proposed computational model and is followed by its convergence analysis in Section \ref{section4}. In Section \ref{section5}, we fit our model to experimental data and evaluate its performance with the human data. We conclude and provide some future directions in Section \ref{section6}.

\begin{figure}[h!]
    \centering
    \includegraphics[width=0.6\linewidth]{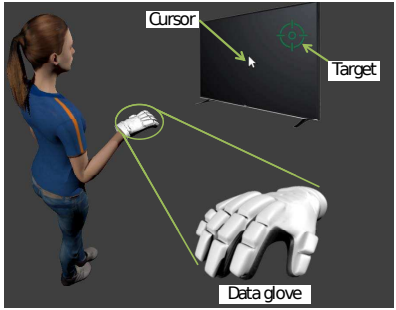}
    \caption{Artistic rendering of the experimental setup with the subject wearing the data glove.}
    \label{fig:experiment}
\end{figure}

\section{Background \& Experimental Setup} \label{section2}
In this section, we present the experimental paradigm leveraged in this paper to model HML in high-dimensional spaces. We also discuss techniques for preprocessing of data collected from these experiments.

Motor (re)-learning is central to the rehabilitation of individuals with hand and finger impairments and/or paresis. It is generally accepted that the mechanisms underlying re-learning during rehabilitation are governed by general principles of motor learning based on the learning of novel tasks by healthy subjects \cite{muratori2013applying, krakauer2006motor}. We focus on the motor learning experiment presented in \cite{mosier2005remapping} wherein healthy subjects learn a novel motor task. Each subject wears a data glove, which records the movements of the 19 finger joints. A body-machine interface (BoMI) uses a matrix to project the 19-dimensional finger movements onto the movement of a cursor on a 2-D computer screen; see Fig.~\ref{fig:experiment}. Specifically, the BoMI projects finger joint velocities $\bs u \in \real^m$ to cursor velocity $\dot{\bs x} \in \real^n$ using a matrix $C \in \real^{n\times m}$ such that
\begin{align}\label{base_dyn}
    \dot{\bs x} = C \bs u.    
\end{align}
Here, $n=2$ and $m=19$. Let $\bs q \in \real^m$ be the vector of finger joint angles and thus, $\dot{\bs q} = \bs u$. 
The subjects need to move the cursor to a target point, and a new target point is prescribed once the current target is reached. To this end, subjects need to learn coordinated finger joint movements that are consistent with the (unknown) projection matrix $C$. This experiment naturally involves motor learning in high-dimensional spaces, with the output being in the low-dimensional screen space. Each subject participated in eight sessions, each comprised of 60 trials. In each trial, the sequence of target points were randomly selected from 4 points located at $(0.5,4.5), (2.5, 0.5), (2.5, 2.5), (4.5, 4.5)$ units on the screen. Due to the large nullspace of $C$, this task is redundant in the sense that a desired cursor movement can be achieved with multiple synergistic motions of finger joints.
This makes learning the optimal (minimum energy) synergistic motion of finger joints particularly challenging.

The data glove records finger joint angles which are used to update the cursor position. Therefore, using standard techniques from adaptive control \cite{sastry1990adaptive}, we write (\ref{base_dyn}), in terms of \emph{filtered increments in joint angles} $\delta \bs q$ and \emph{filtered increments cursor positions} $\delta \bs x$, as
\begin{equation} \label{system}
    \delta \bs x = C \delta \bs q.
\end{equation}
Here, the evolution of $\delta \bs x$ and $\delta \bs q$ is governed by the dynamic equations
\begin{subequations} \label{filtered_pos}
\begin{align}
\dot{\bs \chi} &= - a\bs \chi + \bs x \\
    \delta \bs x &= - a\bs \chi + \bs x \\
    \dot{\delta \bs q} &= -a \delta \bs q + \bs u, \label{delta_u}
\end{align}
\end{subequations}
where $a$ is sufficiently large.

\section{Adaptive Control-based HML Computational Model} \label{section3}
Motor learning literature \cite{tresch2006matrix, santello1998postural} suggests that humans control the high DoF body motor systems using a small number of coordinated joint movement patterns called synergies. For example, four synergies spanned more than $80\%$ of the finger joint configurations required to generate the American Sign Language (ASL) Alphabet~\cite{vinjamuri2014candidates, liu2008contributions}.
We assume that the mapping matrix $C=W\Phi$, where $\Phi \in \real^{h\times m}$ is a matrix of $h$ basic synergies underlying coordinated human finger motions, and $W \in \real^{n \times h}$ represents the contributions (weights) of these synergies during a particular hand motion. We assume that $\Phi$ is an orthonormal matrix, i.e., the synergies contributing to the hand motions lie in orthogonal spaces. We further assume these synergies to be known (see Section \ref{section5.1.1} for details). While our motor learning model is in a high-dimensional space, these synergies reduce the size of the learning space and enable efficient learning by reducing the amount of desired exploration.

It is believed that motor learning in humans involves the formation of internal models \cite{shadmehr1994adaptive}, including models for forward and inverse learning. The forward model predicts the outcomes of motor actions (hand joint movements), whereas the inverse model 
predicts the motor actions required to generate certain outcomes.
In this section, we adopt techniques from adaptive control theory to develop models of forward and inverse HML dynamics.

\subsection{Adaptive Control-based Model of Forward Learning}
In the context of the experimental setup described in Section \ref{section2}, forward learning entails learning the forward BoMI mapping matrix, i.e., the mapping matrix $C$. Suppose that at time $t$, human subject's (implicit) estimate of matrix $C$ is $\hat{C}(t)$. Correspondingly, the estimated change in cursor position $\widehat{\delta \bs x} \in \real^n$ for a change in finger joint angles $\delta \bs q \in \real^m$ is
\begin{equation}
    \widehat{\delta \bs x} = \hat{C}  \delta \bs q,     \label{ref_system}
\end{equation}
where $\hat{C} = \hat{W} \Phi$, and $\hat{W}(t) \in \real^{n\times h}$ is the parameter estimate matrix, which determines the weight the human subject assigns to each synergy. 
It follows from (\ref{system}) and (\ref{ref_system}) that the estimation error
\begin{equation}
    \bs \epsilon = \delta \bs x -\widehat{ \delta \bs x} = -\tilde{W} \Phi  \delta \bs q,    \label{est_error}
\end{equation}
where $\tilde{W}(t) = \hat{W}(t) - W \in \real^{n\times h}$ is the parameter estimation error.

It is well known in adaptive control literature that the gradient descent for $\hat{W}$  on $\frac{1}{2} \norm{\bs \epsilon}^2$ leads to estimation error converging to zero. Accordingly, we select
\begin{align}\label{fw_dyn}
    \dot{\hat{W}} &= -\gamma \nabla_{\hat{W}} \frac{1}{2}\norm{\bs \epsilon}^2 = \gamma \bs \epsilon  \delta \bs q\tran \Phi\tran,
\end{align}
where $\gamma>0$ is the forward learning rate, $\dot{\hat{W}} \in \real^{n\times h}$, and $(.)\tran$ represents the transpose. We posit \eqref{fw_dyn}, with $\gamma$ as a tunable parameter, as a model for human forward learning dynamics. This model is consistent with the error-based human motor learning paradigm and similar models have been used in literature; see e.g., \cite{pierella2019dynamics, herzfeld2014memory}.

\subsection{Adaptive Control-based Model of Inverse Learning Dynamics}
The inverse learning concerns identifying coordinated finger joint movements that drive the cursor to the desired position. If the mapping matrix $C$ is known, then the following proportional control will drive the cursor to the desired position $\supscr{x}{des}$
\begin{align}
     \supscr{\bs u}{des} = k_P C^{\dagger} (\supscr{\bs x} {des} - \bs x) = k_P C^{\dagger} \bs {e_x}, \label{fb_control}
\end{align}
where $C^{\dagger}$ is the pseudo-inverse of $C$, $\bs x$ is the current cursor position, and $k_P>0$ is the (scalar) proportional gain.
We refer to the difference between the desired and current cursor position as the \textit{Reaching Error (RE)} and denote it by $\bs {e_x}$.

Since the mapping matrix $C$ is not known, $C^{\dagger}$ in \eqref{fb_control} can be replaced by $\hat{C}^{\dagger}$, where $\hat{C}$ is computed using forward learning model. It may not be reasonable to assume that humans can compute the pseudo-inverse of their implicit estimate $\hat{C}$. Therefore, we postulate that human subjects determine control input \eqref{fb_control} through gradient-descent based learning. To this end, we let
\begin{align}
    \dot{{\bs u}} = - \eta \nabla_{{\bs u}} &\left(\frac{1}{2} \norm{\dot{\hat{\bs x}} - k_P \bs {e_x}}^2 + \frac{\mu}{2}\norm{ {\bs u}}^2\right),  \notag
\end{align}
where ${\bs u} \in \real^m$, and $\hat{\bs x}$ is the estimated cursor position based on subject's estimate of mapping matrix $\hat{C}(t)$. Here $\eta>0$ is the inverse learning rate, and $\mu>0$ is the regularizer weight. Note that these learning dynamics for $\bs u$ minimizes the difference between the estimated and desired cursor velocity subject to a regularization constraint for $\bs u$.
Upon simplification, these dynamics reduce to
\begin{align}
    \nabla_{ {\bs u}} &\left(\frac{1}{2} \norm{\dot{\hat{\bs x}} - k_P \bs {e_x}}^2 + \frac{\mu}{2}\norm{ {\bs u}}^2\right) \notag \\
    &= \nabla_{ {\bs u}} \left(\frac{1}{2} \norm{\hat{C}  {\bs u} - k_P \bs {e_x}}^2 + \frac{\mu}{2}\norm{ {\bs u}}^2\right) \notag \\
    &= \hat{C}\tran (\hat{C} {\bs u} - k_P \bs {e_x}) + \mu  {\bs u} \notag \\
    &= ( \hat{C}\tran \hat{C} + \mu I ){\bs u} - k_P\hat{C}\tran \bs {e_x},
\end{align}
which results in
\begin{align}
     \dot{{\bs u}} = &- \eta \left( (\hat{C}\tran \hat{C} + \mu I ){\bs u} - k_P\hat{C}\tran \bs {e_x} \right) \notag \\
                       = &- \eta \left( (\Phi\tran \hat{W}\tran \hat{W} \Phi + \mu I ){\bs u} - k_P \Phi\tran \hat{W}\tran \bs {e_x} \right). \label{inv_dyn}
\end{align}
It should be noted that we do not anticipate that human subjects can explicitly compute RHS of \eqref{inv_dyn}. However, we assume that they can implicitly determine RHS of \eqref{inv_dyn} by finding the direction of steepest descent for $\frac{1}{2} \norm{\dot{\hat{\bs x}} - k_P \bs {e_x}}^2 + \frac{\mu}{2}\norm{ {\bs u}}^2$. We posit \eqref{inv_dyn}, with $\eta$ as a tunable parameter, as a model for human inverse learning dynamics. Additionally, the regularization parameter $\mu$ determines the optimality of the learnt $\bs u$.

\begin{definition}[{Persistently Exciting Signal}]
    A signal vector $\omega(t)$ is persistently exciting if there exist $\alpha_1, \alpha_2, \delta > 0$ such that
    \begin{align}
        \alpha_2 I \geq \int_{t_0}^{t_0+\delta} \omega(\tau) \omega(\tau)\tran \mr{d}\tau \geq \alpha_1I, \textup{for all } t_0 \geq 0. \notag
    \end{align}
\end{definition}
\begin{assumption}[{Properties of the Input}] \label{assump1}
    $\delta \bs q, \dot{\delta \bs q} \in \mc L_{\infty}$, and $\delta \bs q$ is persistently exciting.
\end{assumption}

To ensure that $\delta \bs q$ is persistently exciting, a noise $\xi$ is added to the inverse dynamics (\ref{inv_dyn}) giving
\begin{align} \label{noisy_u}
 \!\!\!   \dot{{\bs u}} = &- \eta \left( (\Phi\tran \hat{W}\tran \hat{W} \Phi + \mu I ){\bs u} - k_P \Phi\tran \hat{W}\tran \bs {{e_x}}\right) + {\xi} ,
\end{align}
where $\alpha_2 T \geq \int_{t_0}^{t_0 + T} \xi(\tau) \xi(\tau)\tran \mr{d} \tau \geq \alpha_1 T$, for all $t_0 \geq 0$, and some $\alpha_1, \alpha_2, T > 0$.
\begin{remark}
Since $\delta \bs q$ is the filtered change in finger joint angles, it is reasonable to assume $\delta \bs q$ and its derivative to be bounded. Additionally, since humans use exploratory noise for motor learning~\cite{pierella2019dynamics}, assuming $\delta \bs q$ to be persistently exciting is also reasonable.
\end{remark}
We refer to \eqref{delta_u}, \eqref{fw_dyn}, and \eqref{noisy_u} together as the HML dynamics model.

\section{Analysis of the Adaptive Control-based model of HML Dynamics} \label{section4}
We now analyze the proposed model of HML in equations \eqref{delta_u}, \eqref{fw_dyn}, and \eqref{noisy_u}.
The motor learning literature suggests that the forward learning dynamics evolve on a faster timescale than the inverse learning dynamics; see \cite{yavari2013fast} for plausible neural mechanisms underlying this timescale separation. Consistent with the literature, our fits in Section \ref{section5.2} suggests $k_P \ll \eta \ll \gamma$, i.e., $\delta \bs q$ and $\hat{W}$ dynamics evolve on a faster timescale.
%
Decomposing equations  \eqref{base_dyn} and \eqref{ref_system} into $n$ equations (one each for cursor motion along axes), we obtain $\hat{\delta \bs x}_i=\hat{C}_i \delta \bs q$ for all $i\in\until{n}$. Similar expression can be written for Eq.~\eqref{base_dyn}. We analyze each of these equations separately, and for ease of notation we will drop the index $i$.
{
Defining the slow variable $\bar{e}_x = k_P e_x \in \real$, we get
}
\begin{equation}\label{eq:exdot}
 \dot{\bar{e}}_x = k_P(\dot{x} \supscr{} {des} - \dot{x}) = -k_P C {\bs u}= -k_P W\Phi {\bs u}.
\end{equation}

Let $\map{f_1}{\real^m \times \real \times \real^{1\times h}}{\real^m}, \ \map{f_2}{\real^m}{\real}, \ \map{f_3}{\real^m \times \real^m}{\real^m},\ \map{g}{\real^{1\times h} \times \real^m}{\real^{1\times h}}$ be defined by
\begin{align}
    \begin{split}
        f_1({\bs u}, \bar{e}_x, \hat{W}) = &-( (\hat{C}\tran \hat{C} + \mu I){\bs u} - \hat{C}\tran \bar{e}_x )\\
        f_2(\bs u) = &-W \Phi {\bs u}  \notag\\
        f_3(\delta \bs q, \bs u) = &-\delta \bs q + \bs u/a  \notag\\
        g(\tilde{W}, \delta \bs q) = &-\tilde{W} \Phi \delta \bs q \delta \bs q\tran \Phi\tran.
    \end{split}
\end{align}
Then the model of HML dynamics \eqref{delta_u}, \eqref{fw_dyn}, \eqref{noisy_u}, and \eqref{eq:exdot} can be equivalently written as
\begin{subequations} \label{sspm_u}
\begin{align}
       \dot{{\bs u}} &= \eta~f_1({\bs u}, \bar{e}_x, \hat{W}) + \xi\\
       \dot{\bar{e}}_x &= k_P~f_2(\bs u)\\
       \dot{\delta \bs q} &= a~f_3(\delta \bs q, \bs u) \\
        \dot{\hat{W}} &= \gamma~g(\tilde{W}, \delta \bs q). 
\end{align}
\end{subequations}

Rewriting system (\ref{sspm_u}) in the new timescale $t \mapsto k_P t$, and employing the change of variables $\tilde{W} = \hat{W} - {W},$ where $W$ is the vector of the true weights on synergies, we get the singularly perturbed system defined by
\begin{subequations} \label{sspm_y}
\begin{align} 
        \varepsilon_u \dot{{\bs u}} &= f_1({\bs u}, \bar{e}_x, \tilde{W}) + \xi/\eta\\
        \dot{\bar{e}}_x &= f_2(\bs u)\\
        \varepsilon_{\delta}\dot{\delta \bs q} &= f_3(\delta \bs q, \bs u) \\
        \varepsilon_W \dot{\tilde{W}} &= g(\tilde{W}, \delta \bs q),
\end{align}
\end{subequations}
where $\varepsilon_u = k_P/\eta$, $\varepsilon_{\delta} = k_P/a$, $\varepsilon_W = k_P/\gamma$, and we have used $\dot{\tilde{W}} = \dot{\hat{W}}$.
It can be verified that $({\bs u}, \bar{e}_x, \delta \bs q, \tilde{W}) =(0,0,\bs u/a,0)$ is an isolated equilibrium of system (\ref{sspm_y}) in the absence of noise $\xi$.
Moreover, the functions $f_1, f_2, f_3, g$ are locally Lipschitz and their partial derivatives up to the second-order are bounded in their respective domains containing the origin.
\subsection{Slower Timescale Inverse Learning Dynamics}
Ignoring the noise term in $\bs u$ dynamics, the reduced system associated with (\ref{sspm_y}) is
\begin{subequations} \label{RS}
\begin{align}
        \varepsilon_u\dot{{\bs u}} = &{f}_{1}({\bs u}, \bar{e}_x, 0) \notag \\
        = &-\left( (\Phi\tran W\tran W \Phi + \mu I){\bs u} - \Phi\tran W\tran \bar{e}_x \right) \\
       \dot{\bar{e}}_x = &{f}_{2}(\bs u) = -W \Phi {\bs u}. 
\end{align}
\end{subequations}

\begin{lemma}[\bit{Stability of the Reduced System}]\label{lemma1}
   For the reduced system \eqref{RS}, the equilibrium point at the origin is globally asymptotically stable.
\end{lemma}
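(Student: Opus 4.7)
The plan is to exploit the linearity of \eqref{RS} by constructing a quadratic Lyapunov function and then invoking LaSalle's invariance principle. First I would propose the candidate
\begin{equation*}
V(\bs u, \bar{e}_x) = \frac{\varepsilon_u}{2}\norm{\bs u}^2 + \frac{1}{2}\bar{e}_x^2,
\end{equation*}
which is positive definite and radially unbounded since $\varepsilon_u>0$. Differentiating along \eqref{RS} yields
\begin{equation*}
\dot V = -\norm{W\Phi\bs u}^2 - \mu\norm{\bs u}^2 + \bs u\tran\Phi\tran W\tran \bar{e}_x - \bar{e}_x W\Phi\bs u.
\end{equation*}

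The key observation is that $\bs u\tran\Phi\tran W\tran \bar{e}_x$ and $\bar{e}_x W\Phi\bs u$ are scalars and transposes of one another (using $\bar{e}_x\in\real$), so the cross terms cancel. This leaves $\dot V = -\norm{W\Phi\bs u}^2 - \mu\norm{\bs u}^2 \leq -\mu\norm{\bs u}^2$, which is only negative \emph{semi}-definite since it does not see $\bar{e}_x$ directly. The regularization weight $\mu>0$ is what guarantees a strict drop in the $\bs u$ coordinate, independent of whether $W\Phi\bs u$ happens to vanish.

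Next I would apply LaSalle's theorem. The locus $\{\dot V=0\}$ is $E = \{(\bs u,\bar{e}_x):\bs u=0\}$, and any trajectory remaining in $E$ must satisfy $\dot{\bs u}\equiv 0$; plugging $\bs u=0$ into the first equation of \eqref{RS} forces $\Phi\tran W\tran \bar{e}_x = 0$. Since $\Phi$ has orthonormal rows and $W\neq 0$ (the degenerate case $W=0$ makes the BoMI map trivial and can be excluded by hypothesis), $\Phi\tran W\tran \neq 0$, so necessarily $\bar{e}_x = 0$. Hence the largest invariant subset of $E$ is the origin, and combined with the radial unboundedness of $V$ this gives global asymptotic stability.

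The main obstacle is really just recognizing the cancellation above, which reflects the natural skew-symmetric coupling between the $\bs u$- and $\bar{e}_x$-equations in \eqref{RS}; once that is spotted, the remainder is a textbook LaSalle argument, with orthonormality of $\Phi$ playing the minor role of ruling out spurious invariant directions in $\bar{e}_x$.
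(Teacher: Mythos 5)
Your proof is correct and follows essentially the same route as the paper: the same quadratic Lyapunov function (up to a constant scaling of the $\bar{e}_x$ term), the same cancellation of the skew-symmetric cross terms, a negative semidefinite $\dot V$ bounded by $-\mu\norm{\bs u}^2$, and LaSalle's invariance principle to conclude global asymptotic stability. The only difference is that you spell out the LaSalle step more explicitly (showing that $\bs u\equiv 0$ forces $\Phi\tran W\tran\bar e_x=0$ and hence $\bar e_x=0$ when $W\neq 0$), a nondegeneracy condition the paper leaves implicit.
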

\begin{proof}
Consider the radially unbounded Lyapunov function 
    \begin{equation}
        V({\bs u}, \bar{e}_x) = \frac{\varepsilon_u}{2} {\bs u}\tran {\bs u} + \frac{1}{2 k_P}{\bar{e}_x}^{2}. \notag
    \end{equation}
    The time-derivative of $V$ along the reduced system (\ref{RS}) is
    \begin{align}  \label{V_dot}
        \dot{V} = & \varepsilon_u{\bs u}\tran \dot{{\bs u}} + \frac{1}{k_P} \bar{e}_x\dot{\bar{e}}_x \notag\\
                = & -{\bs u}\tran (\Phi\tran W\tran W \Phi + \mu I) {\bs u} + {\bs u}\tran \Phi\tran W\tran \bar{e}_x \notag\\
                  &- W \Phi {\bs u} \bar{e}_x \notag \\
                = & -{\bs u}\tran (C\tran C + \mu I){\bs u} + ({\bs u}\tran C\tran - C {\bs u}) \bar{e}_x \notag \\
                \leq & -\alpha_1 \norm{{\bs u}}^2,
    \end{align}
    where ${\alpha}_{1} = \subscr{\lambda}{min}(C\tran C + \mu I) > 0$ is the smallest eigenvalue of the matrix in the argument, and 
    ${\bs u}\tran C\tran = C {\bs u}$ since it is a scalar.  
    The set $S=\setdef{({\bs u}, \bar{e}_x)}{\dot{V} = 0}$ does not contain any trajectory of the reduced system except for $({\bs u}, \bar{e}_x) = (0, 0)$. Hence, using LaSalle's invariance principle, the origin of reduced system \eqref{RS} is globally asymptotically stable.
\end{proof}
\subsection{Faster Timescale Forward Learning Dynamics}
The boundary-layer system associated with \eqref{sspm_y} is defined in the fast timescale $\tau_W = t/\varepsilon_W$ as
\begin{subequations}    \label{BLS}
\begin{align}
    \frac{d\tilde{W}}{d\tau_W} &= g(\tilde{W}, \delta \bs q)\\
    \frac{d\delta \bs q}{d\tau_W} &= \frac{\varepsilon_W}{\varepsilon_{\delta}}~f_3(\delta \bs q, \bs u),
\end{align}
\end{subequations}
where the slow-varying states ${\bs u}$, $\bar{e}_x$ are considered frozen.

\begin{lemma}[\bit{Stability of the Boundary Layer System}]\label{lemma2}
Under Assumption \ref{assump1}, for the boundary layer system \eqref{BLS}, the equilibrium point at $(0, \bs u/a)$ is globally exponentially stable.
\end{lemma}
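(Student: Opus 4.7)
The plan is to exploit the cascade structure of the boundary layer system \eqref{BLS}: the $\delta \bs q$ subsystem is independent of $\tilde W$, while the $\tilde W$ subsystem is driven externally by the regressor $\Phi \delta \bs q$. I would therefore establish global exponential stability of each subsystem separately and then combine them via a standard cascade argument. Throughout, I treat $\bs u$ as a frozen parameter, consistent with the singular perturbation framework.

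First, I would handle the $\delta \bs q$ dynamics. Introducing the shifted coordinate $z = \delta \bs q - \bs u/a$, the equation $\mr{d}\delta \bs q/\mr{d}\tau_W = (\varepsilon_W/\varepsilon_\delta) f_3(\delta \bs q, \bs u)$ becomes the linear time-invariant system $\mr{d} z / \mr{d} \tau_W = -(a/\gamma) z$. Global exponential convergence of $z$ to $0$ (equivalently, $\delta \bs q$ to $\bs u/a$) then follows immediately from the Lyapunov function $V_1(z) = \tfrac{1}{2}\|z\|^2$, which yields $\dot V_1 = -(a/\gamma)\|z\|^2$.

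Second, I would address the $\tilde W$ subsystem $\mr{d}\tilde W/\mr{d}\tau_W = -\tilde W \Phi \delta \bs q \delta \bs q\tran \Phi\tran$, which is the workhorse equation of parameter-error dynamics in gradient-type adaptation. Because each row of $\tilde W$ obeys an identical scalar-regressor equation, I would invoke the classical result (e.g., from Sastry-Bodson or Narendra-Annaswamy) that $\dot{\tilde w} = -\tilde w\, \omega \omega\tran$ is globally exponentially stable whenever the regressor $\omega$ is bounded and persistently exciting. To apply it with $\omega = \Phi \delta \bs q$, I would transfer the two required properties from Assumption \ref{assump1}: boundedness of $\omega$ follows from $\delta \bs q \in \mc L_\infty$ and boundedness of $\Phi$, while PE of $\omega$ follows from PE of $\delta \bs q$ together with the orthonormality of $\Phi$, since
\begin{equation}
\int_{t_0}^{t_0+\delta} \Phi\, \delta \bs q(\tau) \delta \bs q(\tau)\tran \Phi\tran\, \mr{d}\tau \;\geq\; \alpha_1\, \Phi \Phi\tran \;=\; \alpha_1\, I. \notag
\end{equation}
A quadratic Lyapunov function $V_2(\tilde W) = \tfrac{1}{2}\mr{tr}(\tilde W \tilde W\tran)$ gives $\dot V_2 \leq 0$, and the PE property upgrades this to uniform exponential decay.

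Third, I would assemble the two pieces. The $\delta \bs q$ dynamics converge globally exponentially and do not depend on $\tilde W$, while the $\tilde W$ dynamics are globally exponentially stable uniformly in the external PE input $\delta \bs q$; a standard cascade result (such as Khalil Lemma 4.7 or Corollary 10.3.3) then delivers global exponential stability of the equilibrium $(0, \bs u/a)$. I expect the main obstacle to be the second step, specifically the careful verification of the PE-to-exponential-stability implication for the regressor $\Phi \delta \bs q$ under the orthonormality hypothesis on $\Phi$, since the remaining two steps are either a direct Lyapunov computation or an invocation of a textbook cascade lemma.
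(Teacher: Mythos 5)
Your proposal is correct and is essentially the argument the paper gestures at: the paper's composite Lyapunov function $V_b(\tilde W,\delta\bs q)=\tilde W\tilde W\tran+\tfrac{\varepsilon_\delta}{\varepsilon_W}\bigl(\delta\bs q-\bs u/a\bigr)\tran\bigl(\delta\bs q-\bs u/a\bigr)$ is exactly the (weighted) sum of your $V_1$ and $V_2$, and both routes ultimately rest on the classical PE-implies-exponential-convergence result for the gradient parameter-error dynamics, with persistent excitation of the regressor $\Phi\,\delta\bs q$ inherited from Assumption~\ref{assump1} via $\Phi\Phi\tran=I$. Your cascade organization is just a more explicit rendering of the same standard adaptive-control technique the paper cites.
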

\begin{proof}
The lemma can be proved using standard adaptive control techniques~\cite{ioannou1996robust}, which leverage the radially unbounded Lyapunov function
   $V_b(\tilde{W}, \delta \bs q) = \tilde{W} \tilde{W}\tran + \frac{\varepsilon_{\delta}}{\varepsilon_W}\left(\delta \bs q -  \frac{\bs u}{a}\right)\tran \left(\delta \bs q - \frac{\bs u}{a}\right)$
and Assumption \ref{assump1}.
\end{proof}
\begin{figure*}[ht!]
    \centering
    \begin{subfigure}{0.25\linewidth}
	    \centering
        \includegraphics[width=1\linewidth, height=1.1\linewidth, keepaspectratio]{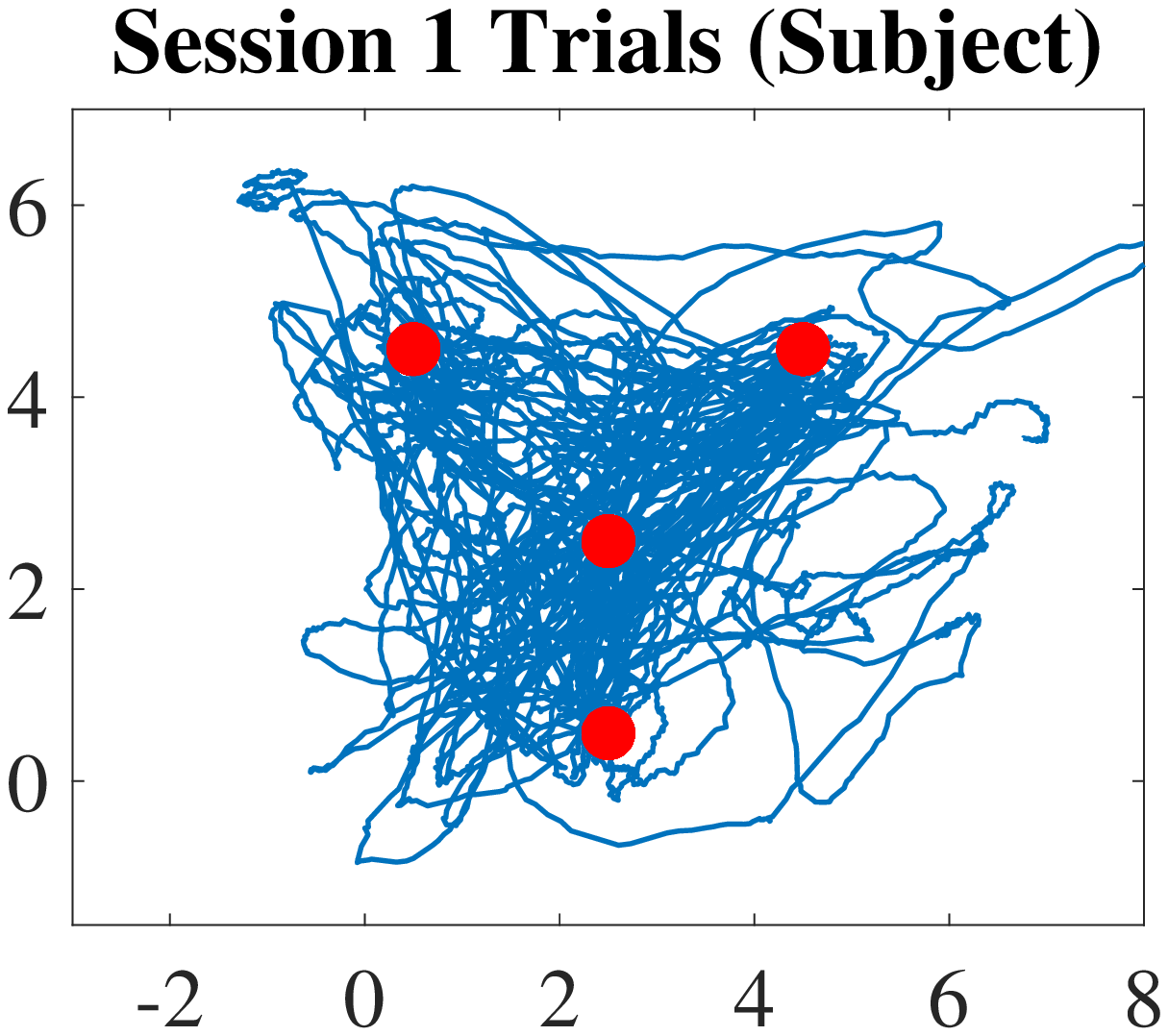}
        \caption{}
        \label{a}
    \end{subfigure}\hspace{-1.7em}%
    ~
    \begin{subfigure}{0.25\linewidth}
	    \centering
        \includegraphics[width=1\linewidth, height=1.1\linewidth, keepaspectratio]{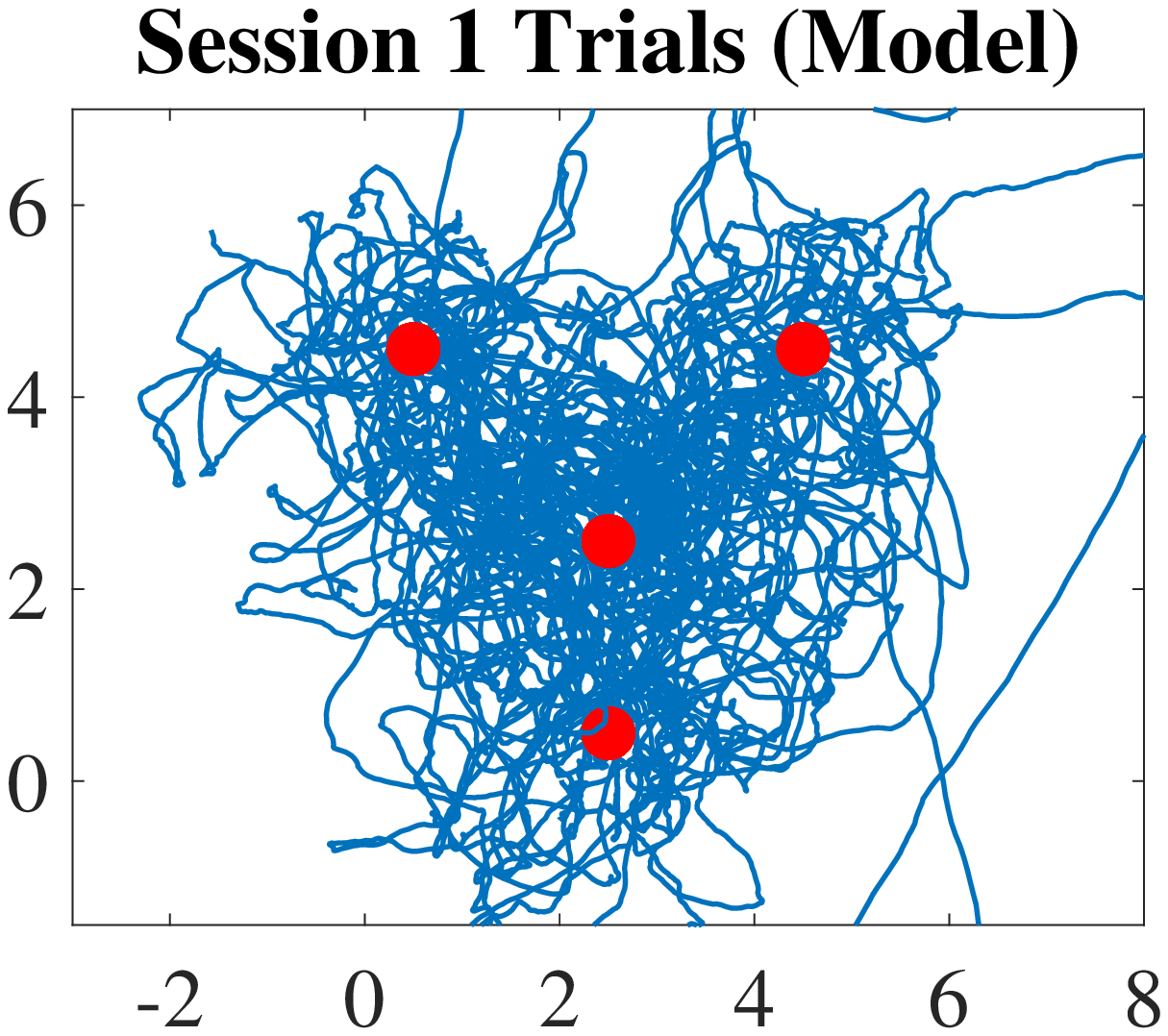}
        \caption{}
        \label{b}
    \end{subfigure}\hspace{-1.7em}%
    ~
    \begin{subfigure}{0.25\linewidth}
	    \centering
        \includegraphics[width=1\linewidth, height=1.1\linewidth, keepaspectratio]{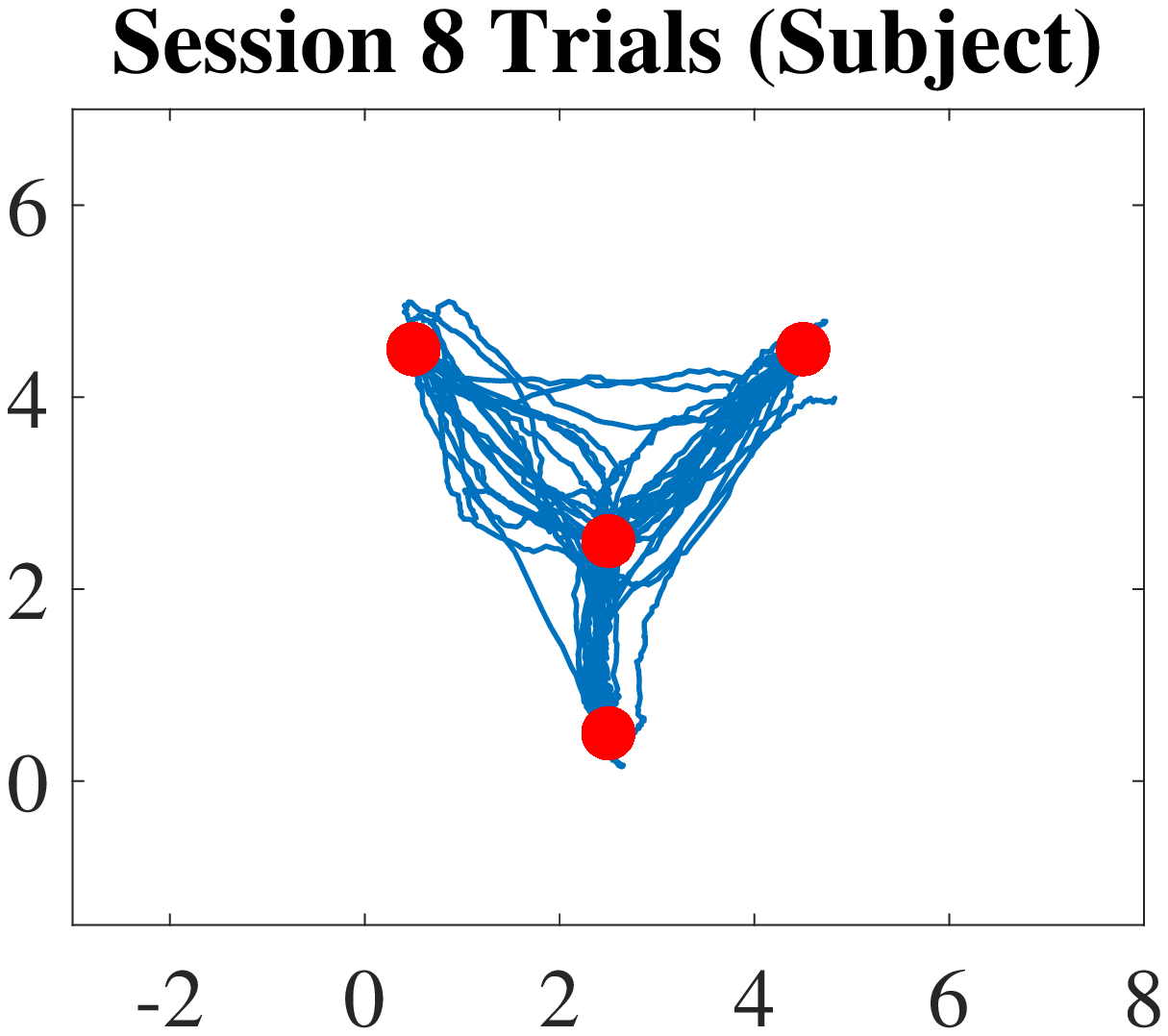}
        \caption{}
        \label{c}
    \end{subfigure}\hspace{-1.7em}%
    ~
    \begin{subfigure}{0.25\linewidth}
	    \centering
        \includegraphics[width=1\linewidth, height=1.1\linewidth, keepaspectratio]{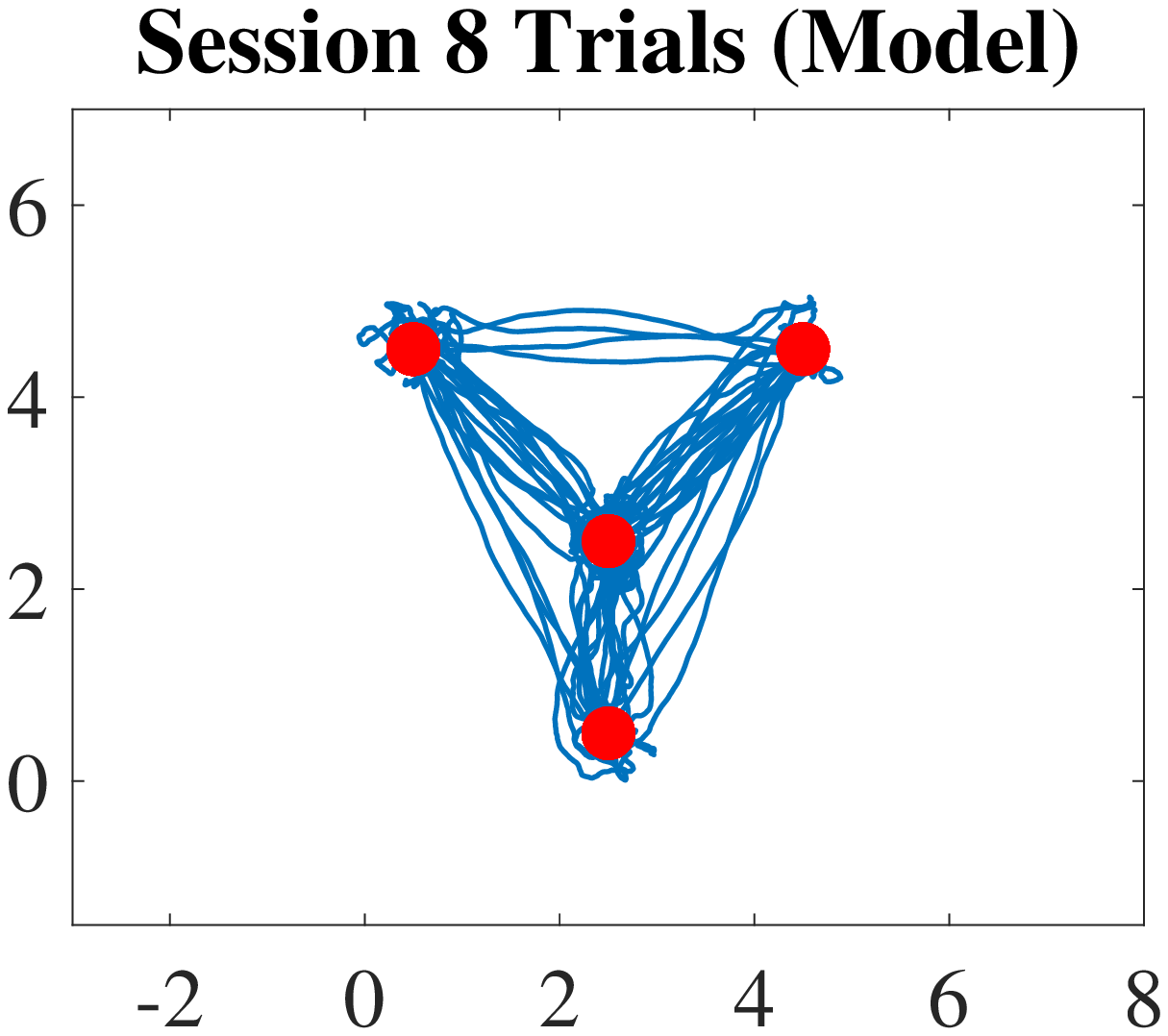}
        \caption{}
        \label{d}
    \end{subfigure}
    \caption{\textbf{Cursor Trajectories:} Cursor trajectory data from human experiments and the fitted model.}
    \label{fig:hand_traj}
\end{figure*}
\subsection{Coupled Forward-Inverse Motor Learning Dynamics}

We now show the stability and convergence of our proposed HML model using singular perturbation arguments.
\begin{theorem}[\bit{Stability of the HML Model~\eqref{sspm_u}}]
For $\varepsilon_W \in (0, \varepsilon^*)$, for $\varepsilon^*$ sufficiently small, and under Assumption \ref{assump1}, the trajectories of HML model \eqref{sspm_u} asymptotically converge to an $O(\|\xi\|)$ neighborhood of the origin.
\end{theorem}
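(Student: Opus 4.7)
The plan is to combine Lemmas~\ref{lemma1} and~\ref{lemma2} via a Tikhonov-style singular perturbation argument in $\varepsilon_W$, and then reintroduce the exploratory noise $\xi$ as a bounded perturbation to obtain the $O(\|\xi\|)$ ultimate bound. First I would verify that the noise-free version of system~\eqref{sspm_y} fits the standard two-time-scale template with slow block $(\bs u, \bar{e}_x)$ and fast block $(\tilde{W}, \delta \bs q)$. Sending $\varepsilon_W \to 0$ pins the fast states at the quasi-steady manifold $h(\bs u) = (0, \bs u/a)$, yielding exactly the reduced system~\eqref{RS}. Lemma~\ref{lemma1} supplies global asymptotic stability of this reduced system, while Lemma~\ref{lemma2} supplies global exponential stability of the boundary-layer system~\eqref{BLS} around $h(\bs u)$, uniformly in the frozen slow variables since $V_b$ depends on the slow state only via the shift $\delta \bs q - \bs u/a$.

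I would then construct a composite Lyapunov function $V_c = V(\bs u, \bar{e}_x) + \beta\, V_b(\tilde{W}, \delta \bs q - \bs u/a)$ with $\beta>0$ chosen small enough to absorb the interconnection terms. Along the unperturbed dynamics, Lemma~\ref{lemma1} contributes the dissipation $-\alpha_1 \|\bs u\|^2$ through the $V$-part, Lemma~\ref{lemma2} contributes a negative definite term in $(\tilde{W}, \delta \bs q - \bs u/a)$ through the $V_b$-part, and the cross terms that arise because $h(\bs u)$ depends on $\bs u$ can be controlled via Young's inequality with coefficients proportional to $\varepsilon_W$. Fixing $\varepsilon_W < \varepsilon^*$ for a sufficiently small threshold makes the cross terms dominated by the diagonal dissipation. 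Since the $V$-part dissipation is only in $\bs u$, the remaining convergence of $\bar{e}_x$ is recovered by a LaSalle-type argument identical to the one used in the proof of Lemma~\ref{lemma1}, based on the observation that $\bs u \equiv 0$ in the invariant set forces $\Phi\tran W\tran \bar{e}_x = 0$ and hence $\bar{e}_x = 0$. This establishes asymptotic stability of the noise-free HML model for $\varepsilon_W \in (0, \varepsilon^*)$.

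Finally, I would reintroduce $\xi$ in the $\bs u$-equation. The additional contribution to $\dot V_c$ is $(\partial V_c/\partial \bs u)\tran \xi$, which is bounded by $c_1 \|(\bs u, \bar{e}_x, \tilde{W}, \delta \bs q - \bs u/a)\| \cdot \|\xi\|$. Completing the square yields $\dot V_c \le -\alpha \|(\bs u, \bar{e}_x, \tilde{W}, \delta \bs q - \bs u/a)\|^2 + c_2 \|\xi\|^2$, so by a standard input-to-state stability argument the trajectories ultimately enter a ball of radius $O(\|\xi\|)$ around the origin. The main obstacle will be that Lemma~\ref{lemma1} only delivers asymptotic (not exponential) stability of the reduced system, whereas the textbook composite-Lyapunov construction typically requires exponential decay on both time-scales to close the cross terms. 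I expect to bridge this either by strengthening Lemma~\ref{lemma1} to an ISS-type statement with respect to the boundary-layer error on compact sets, or by invoking a singular-perturbation result in the Teel-Moreau-Nesic lineage that tolerates merely asymptotic stability of the reduced system and directly yields the practical-stability bound required here.
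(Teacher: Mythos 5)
Your proposal follows essentially the same route as the paper: the paper's proof is a short sketch that builds a composite Lyapunov function from the Lyapunov functions of Lemmas~\ref{lemma1} and~\ref{lemma2} in the style of Khalil's singular-perturbation theorem and then treats $\xi$ as an $O(\|\xi\|)$ perturbation handled via a standard perturbation lemma, which is exactly your plan. You additionally flag, correctly, that $\dot V$ in Lemma~\ref{lemma1} is only negative semidefinite so the textbook composite construction does not apply verbatim --- a subtlety the paper's proof glosses over and that your proposed ISS-type strengthening or LaSalle-based patch would be needed to close rigorously.
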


\begin{proof}With the inclusion of persistently exciting noise $\xi$, the reduced system associated with \eqref{sspm_y} is an $O(\|\xi\|)$ perturbation of the system \eqref{RS}.
Using similar techniques as in \cite[Theorem 11.3]{HKK:02}, the Lyapunov functions in Lemmas \ref{lemma1} and \ref{lemma2} can be used to construct a composite Lyapunov function. Then, the theorem can be established using this composite Lyapunov function and arguments similar to~\cite[Lemma 9.3]{HKK:02}.
\end{proof}
\section{Model Fitting and Comparison with Experimental Data} \label{section5}
In this section, we discuss the methods used in designing the BoMI mapping matrix and finding the model parameters by fitting data from human subject experiments. We then show the performance of our proposed HML model by comparing it with the experimental data.
\subsection{Designing the Forward BoMI Mapping Matrix} \label{section5.1.1}
To design the forward BoMI mapping matrix, Ranganathan \etal~\cite{ranganathan2013learning} collect hand posture data during a \emph{free finger exploration} phase, in which subjects freely choose arbitrary hand postures per their discretion. They perform Principal Component Analysis (PCA) on the collected hand posture data. The first two principal components account for more than $80\%$ of the variance in the joint movement data and are used in the mapping matrix $C$ to map the movements of hand finger joints to cursor movement in $x$ and $y$ direction. 
To reduce the dimension of the learning space, we construct a synergy matrix $\Phi$ using the first four principal components from the above PCA.
While our model can work with any number of synergies, 
we choose four synergies since they are sufficiently representative of the free finger movements (see Section \ref{section3}), and the higher number of synergies requires a higher exploration/excitation noise.
%
%
\subsection{Fitting HML Model to Experimental Data}\label{section5.2} 
We refer to the Euclidean norm of the cursor position at the end of movement and the target point as Reaching Error (RE). End of movement is defined as the time when the cursor reaches inside a 0.15 units radius circle of the target or after 2 seconds of the start of the movement, whichever is earlier. It can be numerically verified that for the HML model, RE follows an exponential trend with the rate of convergence $\eta$. Therefore, for the experimental data, RE is computed by averaging over similar trials (trials with same start and end points) across sessions, followed by smoothing using a 10-trials moving window averaging. The learning rate $\eta$ in the inverse learning dynamics (\ref{inv_dyn}) is determined by fitting RE as an exponential function of trial index $k$, i.e., $\text{RE} = \alpha \exp(-\eta k) + c$. For our experimental data, the estimate of $\eta$ is $0.04522$ with the goodness-of-fit parameter R$^2 = 0.9019$. With the estimated inverse learning rate, we select the forward learning rate $\gamma$ as a minimizer of the norm of the difference between finger joint trajectories from human data and the model. To this end, we perform a grid search over $\gamma \in [0, 10]$, and find the optimum value of $\gamma$ to be equal to $0.262$.

Other variables are heuristically selected as: $k_P\sim O(10^{-3})$, decreasing over sessions, and $\mu =0.3$. $\xi$ is selected as a zero-mean white noise with variance $\subscr{S}{session}(0.01+e^{-0.1 t})$, where $t$ is the time index within the session and $\subscr{S}{session}$ is decreasing over sessions. For these parameters, the HML model is compared with the experimental data in the next section.

\begin{figure}
    \centering
    \begin{subfigure}{0.5\linewidth}
	    \centering
        \includegraphics[width=1\linewidth, height=1.1\linewidth, keepaspectratio]{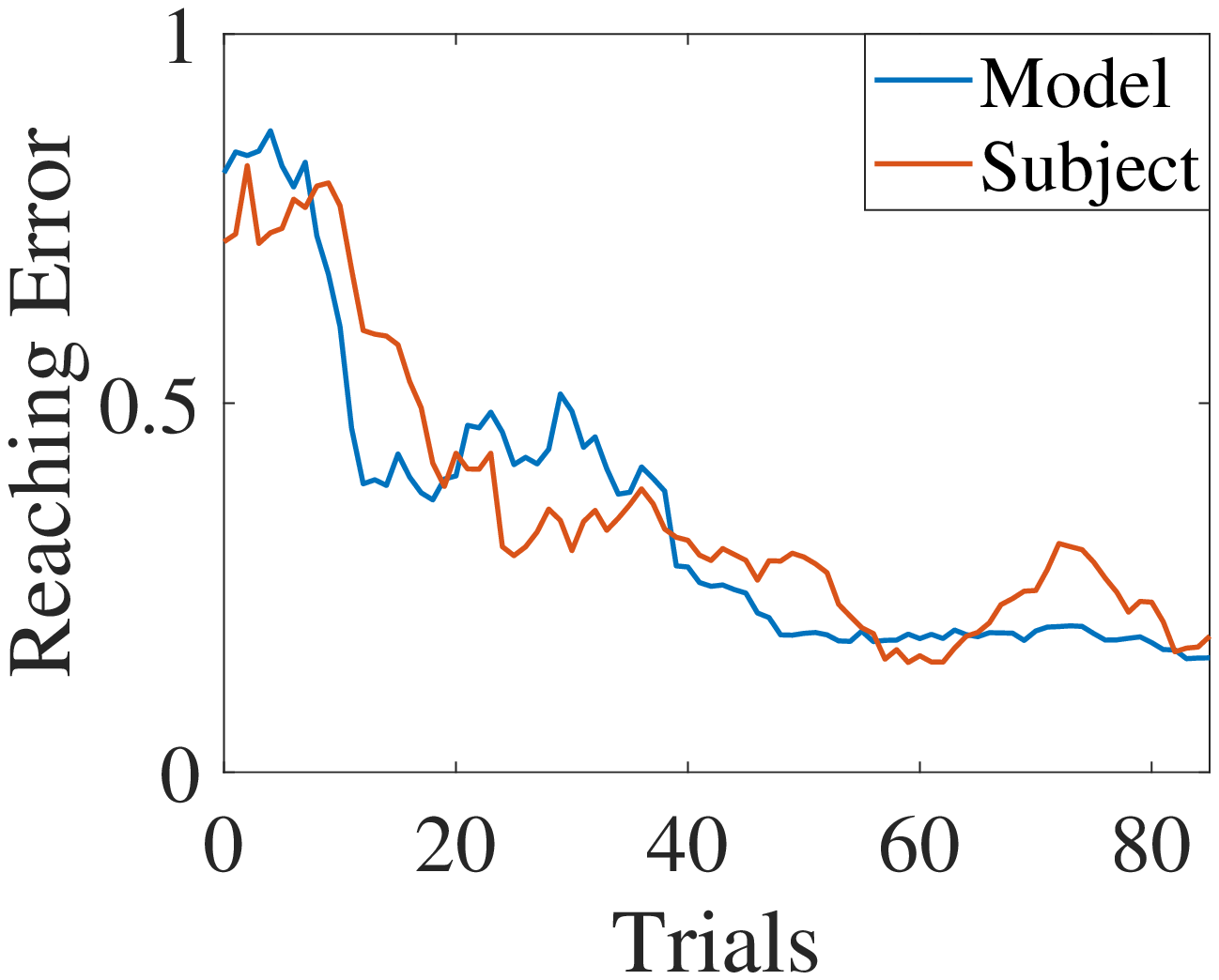}
        \caption{}
        \label{RE}
    \end{subfigure}\hspace{-1.7em}%
    ~
    \begin{subfigure}{0.5\linewidth}
	    \centering
        \includegraphics[width=1\linewidth, height=1.1\linewidth, keepaspectratio]{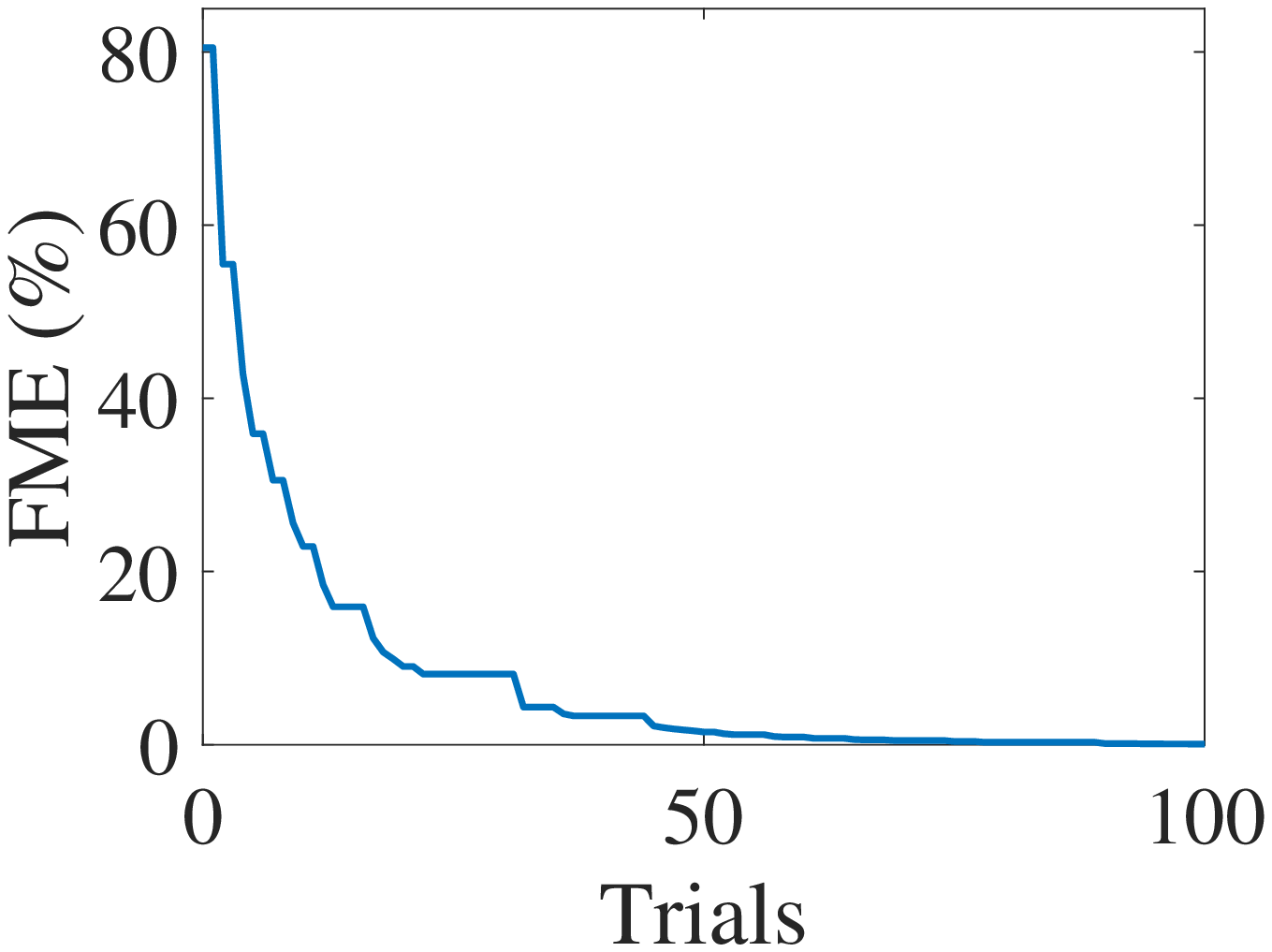}
        \caption{}
        \label{FME}
    \end{subfigure}
    \caption{\textbf{Performance Measures:} (a) RE for the subject (red) and for the fitted model (blue) as a function of trials. (b)  Evolution of FME for the subject as a function of trials, showing only the first 100 trials.}
    \label{fig:FME_RE}
\end{figure}

\subsection{Performance Measures: Comparing with Human Hand Data}\label{section5.perfmeas}
We simulate the HML model with the parameters estimated above. The model is simulated at $100$ Hz, while the data glove samples finger joint positions at $50$ Hz. Therefore, the trajectory data from the simulation model is sub-sampled at the same time indices as the human data for a direct comparison. We use Forward Model Error (FME) \cite{pierella2019dynamics} to quantify the convergence of the subject's estimate of forward mapping, $\hat{C} = \hat{W} \Phi$, to the actual forward mapping matrix, $C= W\Phi$, which is defined by
\begin{align}
    \text{FME}_{k} = \frac{\norm{C - \hat{C}}}{\norm{C}} = \frac{\norm{W - \hat{W}}}{\norm{W}}. \notag
\end{align}
\begin{figure}
    \centering
    \includegraphics[width=\linewidth]{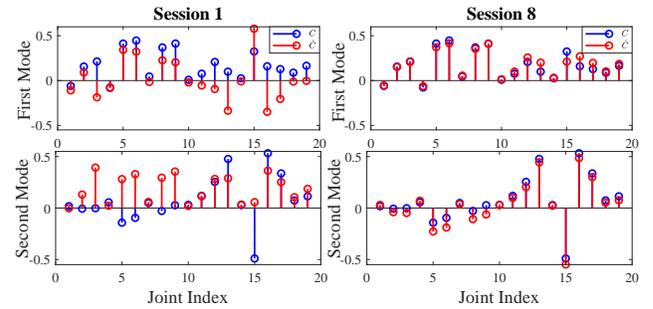}
    \caption{\textbf{Modes of Mapping Matrix}: Evolution of SVD modes of the forward mapping matrix from Session 1 to Session 8.}
    \label{fig:modes}
    \vspace{-0.1in}
\end{figure}
Fig.~\ref{fig:hand_traj} compares the actual trajectories obtained from the human data with that coming out of our proposed model. Results show that our model can mimic the motor learning of human hand motions very closely. Noisy trajectories in the beginning are due to high exploration noise in the initial trials; they look straighter as the trials progress and the subject/model learns the mapping.
The evolution of both FME and RE with the trials is shown in Fig.~\ref{fig:FME_RE}. Both FME and RE decrease with trials and the behavior of the fitted model is consistent with the experimental data. Fig.~\ref{fig:modes} compares the SVD modes of learned forward mapping $\hat{C}$ with the SVD modes of actual mapping matrix $C$. Results show that the modes of learned mapping are more consistent with that of the actual mapping matrix for the later sessions, and thus $\hat{C}$ converges to $C$ with high accuracy.

\section{Conclusion and Future Directions}\label{section6}
In this paper, we propose an HML model that can mimic human motor learning in high-dimensional spaces. We leverage the low-dimensional key motor synergies that drive the movement of human finger joints in high-dimensions and build up on the internal model theory of motor control to obtain a mathematical model of motor learning. Using singular perturbation arguments, we establish that the proposed model of learning dynamics is asymptotically stable. We further establish the conditions under which human estimate of the forward mapping converges to the actual forward mapping. We discuss techniques to fit the proposed model to the experimental data and show that the model captures experimental behavior well.

This model can be leveraged for model-based investigation of motor rehabilitation and for designing strategies for effective motor learning assistance. For example, the HML model can be leveraged for optimal scheduling of motor rehabilitation training. The HML model could potentially help in designing a bi-directional human-robot learning framework that would allow the adaptation of rehabilitation strategies to subject-specific needs and requirements. Furthermore, robotic rehabilitation for high-dimensional motor systems has gained prominence \cite{rashid2019wearable,zhang2014design, castiblanco2021assist, agarwal2017subject, agarwal2019framework}, and we believe that the HML model developed in this paper can be used with such rehabilitation systems to enable bi-directional human-robot interaction.

\bibliographystyle{IEEEtran}
\bibliography{motor_learning,mybib} 

\end{document}